\newtheorem{definition}{Definition}[subsection]
\newtheorem{prop}{Proposition}[subsection]
\newtheorem{remark}{Remark}[subsection] 
\newtheorem{principle}{Principle}[subsection]
\newtheorem{observation}{Observation}[subsection]
\newtheorem{construction}{Construction}[subsection]
\newtheorem{theorem}{Theorem}[section]
\newtheorem{lemma}[theorem]{Lemma}
\newtheorem{conjecture}{Conjecture}[section]
\newtheorem{proposal}{Proposal}[section]
\title{On the Incompressibility of Truth With Application to Circuit Complexity}
\author{Luke Tonon}
\date{December 25, 2025}
\begin{document}

\maketitle

\section{Introduction}
In this paper, we revisit the fundamentals of Circuit Complexity and the nature of efficient computation from a fresh perspective. We present a framework for understanding Circuit Complexity through
the lens of Information Theory with analogies to results in Kolmogorov Complexity, viewing circuits as descriptions of truth tables, encoded in logical gates and wires, rather than purely computational devices. From this framework, we re-prove some existing Circuit Complexity bounds, explain what the optimal circuits for most boolean functions look like structurally, give an explicit boolean function family that requires exponential circuits, and explain the aforementioned results in a unifying intuition that re-frames time entirely.

\subsection{Brief Classical View of Circuits}
Since their introduction in the 1930s by Claude Shannon, Boolean circuits have been understood primarily as computational devices, machines that transform inputs into outputs through logical operations. This perspective, rooted in the development of electronic computers, views circuits as implementing algorithms: collections of logical gates that execute a sequence of logical operations to compute a function.
\begin{definition}
    A Boolean circuit $C$ over a functionally complete basis $B$ (typically $B$ = \{AND, OR, NOT\}) is a directed acyclic graph where:
    \begin{itemize}
    \item Nodes are either input variables $x_1, ..., x_n$ or gates labeled with operations from $B$.
    \item Edges represent wires carrying Boolean values.
    \item One or more nodes are designated as outputs.
    \item Each gate computes its operation applied to its input wires.
\end{itemize}
\end{definition}
\begin{definition}
The size of a Boolean circuit, $C$, denoted $s$ or $|C|$, is the number of gates. The depth, $d$, is the length of the longest path from an input to an output.
\end{definition}
\begin{definition}
A circuit, $C$, computes a Boolean function $f : \{0,1\}^n \rightarrow \{0,1\}$ if for every input $x \in \{0,1\}^n$, evaluating $C$ on $x$ (propagating values through gates) produces $f(x)$ at the output.
\end{definition}
In this view, a circuit is active: it performs computation by processing inputs through gates as electricity flows through a graph, much like a computer executing instructions. The size measures computational resources (number of operations), and depth measures parallel time. This view has been immensely productive, but it may also obscure a more fundamental understanding of what circuit complexity measures!
\subsection{View of Circuits as Descriptions}
We begin with a simple observation: a truth table is not a mathematical abstraction, but a concrete informational object. Specifically, a string of $2^n$ bits.
\begin{definition}
For a Boolean function $f : \{0,1\}^n \rightarrow \{0,1\}$, the truth table $T_f$ is the $2^n$-bit string where the i-th bit (under lexicographic ordering of inputs) equals $f(x_i)$.
\end{definition}
For example, the 3-variable function that outputs 1 only on input 000 has truth table $T_f = 10000000$ (reading inputs 000, 001, 010, ..., 111). While we typically think of truth tables as representations of boolean functions, from an information-theoretic view, they are simply data: $2^n$ bits arranged in a sequence. The ``meaning" (a bit corresponding to input pattern $i$) is a convention we impose.
\subsubsection{From Computation to Describing}
Consider the traditional view: a circuit $C$ computes function $f$ by taking inputs, processing them through gates, and producing outputs.
We consider an alternative view: a circuit $C$ describes or encodes the truth table $T_f$.\\
To see this, observe that a circuit and its truth table contain the same information, merely represented differently:
\begin{prop}\label{prop:first}
A circuit, $C$, with $s$ gates over basis $B$ and a truth table $T$ on n variables are informationally equivalent:
\begin{itemize}
    \item $C \rightarrow T$: Evaluate $C$ on all $2^n$ inputs to recover $T$ (possible in time: $O(s \cdot 2^n))$.
    \item $T \rightarrow C$: Construct a circuit computing $T$ (always possible with circuit size being exponential in the worst-case).
\end{itemize}
Both directions are effective: given either representation, we can reconstruct the other with no loss of information.
\end{prop}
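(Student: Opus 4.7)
The plan is to prove the two directions separately as explicit constructive procedures, and then note that their effectiveness is precisely what \emph{informational equivalence} should mean here. Both directions are essentially classical folklore, so the proof is more a careful bookkeeping than a deep argument; the only real subtlety is being clear about what is being claimed.

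For the $C \rightarrow T$ direction, I would simulate $C$ on every input. Since $C$ is a DAG, I would topologically order its $s$ gates so that every gate's inputs are computed before the gate itself. For each $x \in \{0,1\}^n$, evaluating the gates in this order takes $O(s)$ time, assuming bounded fan-in (which is built into any functionally complete basis $B$ of fixed arity). Iterating over the $2^n$ inputs in lexicographic order and recording the output bit yields $T_f$ in total time $O(s \cdot 2^n)$. The correctness is immediate from the definition of \emph{computes} in the preceding definition.

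For the $T \rightarrow C$ direction, I would use the canonical disjunctive normal form construction: for each index $i$ with $T[i] = 1$, I form the minterm $\bigwedge_{j=1}^{n} \ell_{i,j}$, where $\ell_{i,j}$ is $x_j$ if the $j$-th bit of the binary representation of $i$ is $1$ and $\neg x_j$ otherwise. OR-ing all such minterms produces a circuit whose output is $1$ exactly on those $x$ for which $T[\text{index}(x)] = 1$, hence it computes $f$. Each minterm uses $O(n)$ gates and there are at most $2^n$ of them, giving the worst-case size $O(n \cdot 2^n)$ promised in the statement. (Shannon's classical $\Theta(2^n/n)$ bound sharpens this, but is not needed for the proposition.)

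The only conceptual step that might look like an obstacle is justifying that these two procedures together establish ``informational equivalence.'' I would handle this in a single sentence: since the two maps are both explicit, terminating algorithms acting on the respective representations and recovering the paired object exactly, neither direction loses information, and either representation determines the other uniquely. This is exactly the sense of equivalence intended, and it matches the analogous Kolmogorov-style notion of mutually computable descriptions that the introduction alludes to.
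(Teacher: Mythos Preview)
Your proposal is correct and matches the paper's approach: the paper does not give a separate proof block for this proposition, treating the two bullet points in the statement as the argument itself, and the DNF/minterm construction you spell out for the $T \rightarrow C$ direction is exactly the ``canonical lookup table circuit'' the paper later formalizes in its Construction in Section~2.2. Your topological-order simulation for $C \rightarrow T$ and the closing sentence on effectiveness simply make explicit what the paper leaves implicit, so there is no substantive divergence.
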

\subsubsection{Circuits as Compression}
Further insight emerges when we examine the size of these representations.
Consider a truth table on $n$ variables where exactly one row outputs 1, such as the first row, and all others output 0. As a bit string, this would grow massively for large n: $T = 0000...0001$. ($2^n$ bits, with a single 1).\\As written, this requires $2^n$ bits to specify, but there's clearly a much more compact description. Namely, the circuit $C = AND(x_1, x_2, x_3, ..., x_n)$. We simply output 1 if the input matches the pattern and output 0 otherwise. This circuit has a remarkably small number of gates, yet compresses a much larger amount of information.
\begin{remark}
The above example reframes circuits: they are compression mechanisms where the ``compression algorithm" is logical structure itself. Instead of standard compression that exploit statistical patterns in data, circuits exploit logical patterns in truth values.
\end{remark}
\subsubsection{Connection to Kolmogorov Complexity}
We now see this perspective connects circuit complexity to a fundamental concept in information theory.\\
The Kolmogorov complexity~\cite{Kolmogorov1965} $K(s)$ of a string $s$ is the length of the shortest program (in some fixed universal language) that outputs $s$.
\begin{itemize}
\item Incompressible strings have $K(s) \approx |s|$: the best ``program" is essentially ``print s."
\item Compressible strings have $K(s) \ll |s|$ as they have compressible descriptions.
\end{itemize}
As an interesting analogy:
\begin{itemize}
\item Kolmogorov complexity: Shortest program to print a string.
\item Circuit complexity: Smallest circuit to compute a truth table.
\end{itemize}
Both measure description length in different computational models:
\begin{itemize}
\item Programs: sequential, Turing-complete
\item Circuits: parallel, Boolean logic
\end{itemize}
\subsection{A Unifying Intuition}
Putting the details together suggests a fundamental principle that also serves as intuition of what makes solvers efficient. For the rest of the paper, we will take this principle to its deep logical consequences.
\begin{principle}
Efficient computation is efficient description. An algorithm that computes a function quickly is performing a compressed encoding of that function's truth table through logical operations.
\end{principle}
When we ask ``can this function be computed efficiently?" we are really asking ``can its truth be compressed efficiently using logic?" As another example, consider a simplified view of Turing machine performing a computation. It reads an input string on the tape, and then depending on the machine's logical rules, may replace parts of the strings on the tape, and may put more characters on the tape as it performs its computation, ultimately printing a final correct output string (a true string relative to the description of the problem it solves). A Turing machine implicitly describes the solution as it performs its computation. Additionally, we can interpret the classes of circuit models studied in Circuit Complexity as various restrictions on logic as a compression device.
\section{Applying the framework to existing bounds}
Having established circuits as logical descriptions of truth tables, we now revisit classical results through this lens. Rather than prove new theorems yet, we first demonstrate that known results emerge naturally, and with potentially greater clarity from the information-theoretic perspective.\\
This serves two purposes: validates that the framework is mathematically sound, capable of recovering established facts as special cases, and it reveals why these results hold, not merely that they hold. The explanations we obtain are potentially more intuitive than the original proofs, suggesting that the descriptive view is not merely equivalent to the computational view, but is potentially more fundamental.
We begin with the cornerstone result of circuit complexity: Shannon's 1949 lower bound \cite{Shannon1949}.
\subsection{Shannon's Lower Bound through Information Theory}
Shannon proved that almost all Boolean functions require large circuits. The original approach is often shown as counting the number of distinct circuits of size $s$ with combinatorics and compare to the number of Boolean functions $2^{2^n}$. The approach then shows that, since the number of circuits of size $s$ for large $n$ is less than $2^{2^n}$, most functions require larger circuits (specifically $\Omega(2^n/n)$ gates). We now examine how the framework recreates this bound in a purely information-theoretic view.
\begin{observation}
Approach this bound through description length.
\begin{itemize}
\item Recall a truth table $T_f$ for $n$ variables is a $2^n$-bit string.
\item A circuit $C$ with $s$ gates can be encoded in $O(s\log s)$ bits.
\begin{proof}
Each gate is specified by: (1) its type from basis $B$, requiring $\log
|B|$ bits, (2) its two input wires, each selected from at most $s$ previous gates/inputs, requiring 2 $\log(s+n)$ bits. Total per gate: $O(\log s)$ bits. For $s$ gates: $O(s\log s)$ bits.
\end{proof}
\end{itemize}
\end{observation}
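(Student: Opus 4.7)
The plan is a direct constructive encoding argument. The first bullet is immediate from the definition of truth table $T_f$: there is one output bit per input in $\{0,1\}^n$, giving exactly $2^n$ bits. The substance lies in the second bullet, which I would prove by exhibiting a concrete encoding scheme for circuits and bounding its bit-length.

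First, I would fix a topological ordering of the gates $g_1, g_2, \ldots, g_s$ of $C$, so that each $g_i$ receives its inputs either from an earlier gate $g_j$ with $j < i$ or from one of the $n$ input variables. Then I would describe $C$ by listing, for each gate in order, (i) its operation type, which costs $\lceil \log_2 |B| \rceil = O(1)$ bits since the basis $B$ is fixed, and (ii) the source of each incoming wire, drawn from a set of size at most $i-1+n \le s+n$, which costs $\lceil \log_2(s+n) \rceil$ bits per wire. With the standard fan-in-$2$ basis, each gate description therefore takes $O(\log(s+n))$ bits, and summing over all $s$ gates gives $O(s \log(s+n))$ bits total.

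To reach the stated $O(s \log s)$ bound, I would observe that in the regime where this estimate is applied (the counting argument for Shannon's lower bound) one has $s$ at least linear in $n$, so $\log(s+n) = O(\log s)$. I would also prepend a short header encoding $s$ and $n$ using a self-delimiting code such as Elias coding, at cost $O(\log s + \log n)$ bits, and reserve $O(\log s)$ bits to designate the single output gate; both are absorbed into the main term. The ordering convention ensures that distinct circuits (up to isomorphism of gate labels) map to distinct codewords, which is what a downstream counting argument will require.

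The main obstacle is not mathematical depth but bookkeeping precision: I need to verify that the collapse $\log(s+n) \to \log s$ is legitimate in the target regime, and that the encoding is genuinely injective so that counting codewords upper-bounds the number of realizable circuits. Once these points are nailed down, the observation sets up the following one-line derivation of Shannon's bound: since the $2^{2^n}$ Boolean functions on $n$ variables cannot all be described by strings drawn from a space of only $2^{O(s \log s)}$ encodings, we must have $s \log s = \Omega(2^n)$, i.e.\ $s = \Omega(2^n/n)$ for almost every function.
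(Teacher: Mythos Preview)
Your proposal is correct and follows essentially the same encoding argument as the paper: specify each gate by its basis type plus two wire indices into the pool of at most $s+n$ previous gates/inputs, for $O(\log s)$ bits per gate and $O(s\log s)$ total. You are simply more explicit than the paper about the topological ordering, the collapse $\log(s+n)=O(\log s)$ in the regime $s\ge n$, the header/output-designation overhead, and injectivity, all of which the paper leaves implicit.
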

A natural question arises: If circuits of size $s$ can describe truth tables of size $2^n$, what compression ratio do they achieve?
\begin{prop}
A circuit with $s$ gates describes a $2^n$-bit truth table using $O(s\log s)$ bits. The compression ratio is:
$$p(s, n) = \frac{\text{[Description size]}}{\text{[Data size]}} = \frac{s\log s}{2^n}$$
For the optimal circuit size $s = 2^n/n$ predicted by Shannon:
$$p(\frac{2^n}{n}, n) = \frac{\frac{2^n}{n} \log \frac{2^n}{n}}{2^n}$$
$$p(\frac{2^n}{n}, n) = 1 - \frac{\log n}{n}$$
Taking the limit as $n$ approaches infinity:
$$\lim_{n\to \infty}p(\frac{2^n}{n}, n) = \lim_{n\to \infty}1 - \frac{\log n}{n} = 1$$
\end{prop}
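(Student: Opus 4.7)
The plan is to treat this proposition as essentially a calculation built on the observation already established, namely that a circuit of size $s$ admits an encoding in $O(s \log s)$ bits. First I would set up the compression ratio as the quotient of this description length by the raw truth table length $2^n$, so that $p(s,n) = (s \log s)/2^n$ by definition. The substance of the statement is then the asymptotic behavior at the Shannon-optimal size $s = 2^n/n$, and the remaining work is algebraic manipulation of logarithms followed by a single elementary limit.

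Next I would substitute $s = 2^n/n$ into $p(s,n)$ and expand using the logarithm identity $\log(2^n/n) = \log 2^n - \log n = n - \log n$ (with logarithms base $2$). After cancellation of the $2^n$ factor in numerator and denominator, this collapses cleanly to $(n - \log n)/n = 1 - (\log n)/n$, which is the intermediate form claimed in the statement.

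Finally I would take the limit as $n \to \infty$, invoking the standard fact that $(\log n)/n \to 0$ (which follows, for instance, from L'Hôpital's rule or from the elementary observation that $\log n$ grows more slowly than any positive power of $n$). This yields $\lim_{n\to\infty} p(2^n/n, n) = 1$, completing the proof.

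There is no real obstacle here; the statement is an algebraic reformulation of the encoding bound, and the only analytic content is the well-known sublinear growth of $\log n$ versus $n$. If anything deserves a word of care, it is ensuring consistency of the logarithm base throughout the calculation (the whole setup uses base $2$, so that $\log 2^n = n$ holds exactly), and flagging that the $O(\cdot)$ constant in the encoding bound has been absorbed into the identification of $p(s,n)$ with $(s \log s)/2^n$ up to a multiplicative constant that does not affect the limit.
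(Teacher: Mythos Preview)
Your proposal is correct and matches the paper's approach exactly; in fact, the paper does not even give a separate proof environment for this proposition, since the calculation is embedded directly in the statement itself. Your additional remarks about the logarithm base and the suppressed $O(\cdot)$ constant are, if anything, more careful than the paper, which silently drops the constant when passing from $O(s\log s)$ to $s\log s$ in the numerator.
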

As $n$ grows, we observe that circuits of size $2^n/n$ achieve compression ratio approaching 1, essentially no compression at all. The circuit description requires nearly as many bits as the truth table itself, just as information theory intuition would predict.
\begin{theorem}\label{thm:fundamental}
\textbf{(Repeat of Shannon's bound)} Most Boolean functions on $n$ variables require circuits of size $\Omega(2^n/n)$.
\end{theorem}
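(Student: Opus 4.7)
The plan is to derive the theorem as a direct corollary of the encoding bound in the preceding observation, reframed as a Kolmogorov-style incompressibility statement for the set of all $2^n$-bit strings. Identify each Boolean function $f \colon \{0,1\}^n \to \{0,1\}$ with its truth table $T_f \in \{0,1\}^{2^n}$; the total number of such functions is $2^{2^n}$. By the observation, every circuit of size at most $s$ admits a description of length at most $L(s) = c\, s \log s$ for some absolute constant $c$ depending only on $|B|$. Decoding a bit string as a circuit and then evaluating it to its truth table yields a (partial) map from $\{0,1\}^{\leq L(s)}$ onto the set $\mathcal{F}_s$ of Boolean functions computable by circuits of size at most $s$, so $|\mathcal{F}_s| \leq 2^{L(s)+1}$.

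Next I would compare this count against $2^{2^n}$. If a majority (or any fixed positive fraction) of Boolean functions lay in $\mathcal{F}_s$, then $2^{L(s)+1} \geq \tfrac{1}{2} \cdot 2^{2^n}$, which forces $c\, s \log s \geq 2^n - 2$, and hence $s \log s = \Omega(2^n)$. Solving this algebraically yields $s = \Omega(2^n / n)$, since any $s = o(2^n / n)$ would give $s \log s = o(2^n)$ because $\log s \leq n$. The contrapositive is exactly the theorem: all but a vanishing fraction of Boolean functions require circuits of size $\Omega(2^n/n)$. By strengthening the comparison to $2^{L(s)+1} \leq 2^{2^n - k}$ for growing $k$, one recovers the stronger fact that the fraction of ``easy'' functions is doubly-exponentially small.

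The argument is structurally identical to Shannon's original counting, but the compression framing makes the content transparent: truth tables are $2^n$ bits of data, circuit descriptions supply at most $O(s \log s)$ bits of compressed data, and pigeonhole on incompressible strings forces $s \log s = \Omega(2^n)$. The only step that deserves genuine care is the encoding length bound from the observation, since the theorem is tight up to constants and any slack in $L(s)$ would weaken the conclusion. Concretely, at the critical scale $s \approx 2^n/n$ we have $\log(s+n) = \Theta(n)$, so the per-gate cost is $\Theta(n)$ bits and the total description length is $\Theta(2^n)$ bits, matching the truth-table length up to constants; this is precisely why the bound is sharp and why the compression ratio $p(s,n)$ tends to $1$. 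There is no deep obstacle beyond this bookkeeping, which is the expected situation when re-deriving a classical bound inside a new framework.
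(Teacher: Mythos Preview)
Your argument is correct, but it is not quite the route the paper takes. You run a straight counting/pigeonhole argument on the encoding map: at most $2^{L(s)+1}$ distinct length-$\le L(s)$ strings, hence at most that many functions in $\mathcal{F}_s$, and comparing with $2^{2^n}$ forces $s\log s = \Omega(2^n)$. The paper instead casts the same inequality as an \emph{entropy} bound: it puts the uniform distribution $U_n$ on all functions, builds a two-case lossless description scheme (circuit encoding of length $\ell(s)$ for the $(1-\varepsilon)$ fraction assumed to have small circuits, raw $2^n$-bit truth table otherwise), and invokes Shannon's source coding theorem to get $\mathbb{E}[L(f)]\ge H(U_n)=2^n$, from which $(1-\varepsilon)\ell(s)+\varepsilon\,2^n\ge 2^n$ and hence $\ell(s)\ge 2^n$. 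Your approach is the more elementary one and immediately yields the doubly-exponential tail you mention; the paper's approach is chosen deliberately to make the point that the bound is an instance of ``no lossless code beats entropy'' rather than a combinatorial coincidence, which is the thematic payoff the section is after (cf.\ the remark following the proof). Mathematically the two are equivalent for the uniform distribution, so nothing is lost either way; the difference is presentational emphasis, not content.
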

\begin{proof}[Proof via entropy]
Let $U_n$ be the uniform distribution on all Boolean functions
$f:\{0,1\}^n\to\{0,1\}$. Then
\[
  H(U_n) \;=\; \log_2(2^{2^n}) \;=\; 2^n,
\]
since there are $2^{2^n}$ such functions.
Assume for contradiction that for some $s=s(n)$, a $1-\varepsilon$ fraction of
all functions on $n$ bits admit a circuit of size at most $s$, where
$\varepsilon\in(0,1/2)$ is fixed and $s$ is significantly smaller than
$2^n/n$ (to be made precise below). Let $\mathcal{F}_{\le s}$ be the set of such functions.\\
By hypothesis, each $f\in\mathcal{F}_{\le s}$ has a circuit $C_f$ of size at most $s$, and hence a binary description of length at most $\ell(s)$. Extend this to a (partial) description scheme for \emph{all} functions by declaring that functions $f\notin\mathcal{F}_{\le s}$ are described by some fallback representation of length at most $L:=2^n$ bits (for instance, the raw truth table).\\Thus for a random $f\sim U_n$ the description length $L(f)$ satisfies
\[
  L(f) \;\le\; \begin{cases}
    \ell(s) & \text{if } f\in\mathcal{F}_{\le s},\\
    L       & \text{otherwise.}
  \end{cases}
\]
By the assumption that $\mathcal{F}_{\le s}$ has measure at least $1-\varepsilon$ under $U_n$, we obtain
\[
  \mathbb{E}[L(f)] \;\le\; (1-\varepsilon)\,\ell(s) \;+\; \varepsilon\,L.
\]
On the other hand, by the basic entropy bound for lossless codes (or by
Shannon's source coding theorem \cite{Shannon1948}, using a prefix-free refinement of our
encoding) any lossless description scheme must satisfy
\[
  \mathbb{E}[L(f)] \;\ge\; H(U_n) \;=\; 2^n.
\]
Combining the two inequalities gives
\[
  2^n \;\le\; (1-\varepsilon)\,\ell(s) \;+\; \varepsilon\,2^n.
\]
Rearranging,
\[
  (1-\varepsilon)\,2^n \;\le\; (1-\varepsilon)\,\ell(s)
  \quad\Rightarrow\quad
  \ell(s) \;\ge\; c\,2^n
\]
for some constant $c>0$ depending only on $\varepsilon$. Since
$\ell(s) = O(s\log s)$, this forces
\[
  s\log s \;\ge\; c\,2^n
  \quad\Rightarrow\quad
  s \;\ge\; \Omega\!\Big(\frac{2^n}{\log(2^n)}\Big)
  \;=\; \Omega\!\Big(\frac{2^n}{n}\Big).
\]
Thus, it is impossible for a positive fraction of Boolean functions on $n$
inputs to be computable by circuits of size $o(2^n/n)$. Equivalently, almost
all functions require size at least $\Omega(2^n/n)$, as claimed.
\end{proof}
\begin{remark}
Shannon's original proof is often shown that most functions need large circuits by counting. The information-oriented proof shows structurally \textbf{why} this must be so: \textbf{information theory forbids significantly compressing most data.}
\end{remark}
Most truth tables are essentially ``random": they have barely any structure for logic to compress. Random data is mostly incompressible. Therefore, the circuit ``description" must be nearly as large as the data itself. The factor of $n$ in the denominator ($2^n/n$ rather than $2^n$) represents the minimal overhead of circuit structure: the ``syntax cost" of writing gates and wires instead of raw bits. It's the best you can do even with optimal organization. This is not a non-constructive combinatorial coincidence. It is an \textbf{information-theoretic necessity that shows logic cannot cheat entropy}. Circuits store data, and this is even clearer in the context of Solid-State Drives (SSDs) that use $NAND$ flash memory.
\subsection{What Do Optimal Circuits Look Like?}
Having established that most boolean functions need $\Omega(2^n/n)$ gates, we can now ask: what is the structure of these optimal circuits? Answering this question seems mysterious from a purely combinatorial view. However, the information-theoretic view provides an intuitive prediction: \textbf{if a circuit cannot compress the truth table, it must essentially become the table itself}.
\begin{observation}
For ``random" (incompressible) Boolean functions requiring $\Omega(2^n/n)$ gates, the minimal circuit should have the structure of an ``optimized lookup table". Essentially, a decision tree that systematically checks input patterns and outputs the corresponding truth value.
\end{observation}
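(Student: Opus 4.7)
The plan has two matched halves: an upper-bound construction exhibiting a lookup-table circuit of size $(1+o(1))\,2^n/n$, and an incompressibility argument showing that any near-minimum circuit for a random $f$ must share this structure.

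For the upper bound, I would present a Lupanov-style decomposition. Partition the input bits into a ``row'' group of $k := n - \lceil 2\log n\rceil$ bits and a ``column'' group of the remaining bits. On the column bits, precompute (as shared subcircuits) every sub-truth-table that could appear as a row of the partitioned table; then use the row bits to drive an AND/OR multiplexer selecting the correct precomputed block. Careful gate-counting yields $(1+o(1))\,2^n/n$ gates, matching Shannon's lower bound. The resulting circuit is visibly a ``pattern-matching lookup table'': conjunctions over input literals select rows, disjunctions aggregate the matched entries, exactly as the observation describes.

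For the structural claim, I would use the framework's incompressibility principle to rule out deviations beyond the controlled sharing that Lupanov already exploits. Suppose $C_f$ is a minimum circuit for an incompressible $f$ and it contains an exploitable regularity absent from the Lupanov baseline --- e.g., a subcircuit reused more times than Lupanov needs, or an algebraic identity letting several truth-table entries collapse to a common subexpression. Any such regularity yields a strictly shorter encoding of $T_f$: describe the shared part once, and its uses in $O(\log s)$ bits apiece. If the aggregate saving is $\omega(1)$ bits below the $O(s \log s)$ baseline, the description length drops below the entropy bound $2^n$ from Theorem \ref{thm:fundamental}, contradicting the incompressibility of $T_f$. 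Hence, up to $o(s)$ gates, $C_f$ must look like the Lupanov template --- a shared sub-table precomputation followed by pattern-matching selection --- which is precisely the ``optimized lookup table'' of the observation.

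The main obstacle will be giving a precise formalization of ``lookup-table structure'', since it is inherently a syntactic/semantic notion rather than a purely numerical one. I expect the right invariant to measure the amount of sharing \emph{beyond} the Lupanov baseline, for instance the total size of subcircuits whose output fan-out asymptotically exceeds what the decomposition predicts. Choosing the invariant so that (a) the Lupanov construction satisfies it, and (b) any violation yields a usable compression of the truth table, is the delicate step; once chosen, the entropy argument proceeds essentially as in Theorem \ref{thm:fundamental}, applied to the candidate shared substructure rather than to the circuit as a whole.
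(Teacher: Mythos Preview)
Your upper-bound half aligns with the paper: it, too, exhibits the Lupanov decomposition and shows it achieves $(1+o(1))\,2^n/n$ gates, confirming that a lookup-table-style circuit attains the Shannon bound. The paper's block sizes differ from yours ($k=\lceil\log n\rceil$ iterated, rather than a single split at $n-\lceil 2\log n\rceil$), but the mechanism and conclusion are the same.

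Your structural half, however, attempts something the paper does not. The paper treats the observation as an informal prediction: since Lupanov's construction matches Shannon's lower bound and is visibly a lookup table, it concludes in a remark that ``optimized brute-force enumeration is optimal.'' It never argues that \emph{every} minimum circuit for an incompressible $f$ must structurally resemble the Lupanov template; it only exhibits one optimal circuit that does, and lets the matching bounds carry the intuition.

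Your attempt to prove that stronger claim has a real gap. You argue that if a minimum $C_f$ exhibits ``regularity beyond Lupanov'' --- extra sharing or algebraic collapse --- then $T_f$ admits a shorter encoding, contradicting incompressibility. But $C_f$ already has size $s \approx 2^n/n$, and its standard gate-by-gate encoding is $\Theta(s\log s)\approx 2^n$ bits \emph{regardless of its internal topology}. Additional fan-out or reuse inside $C_f$ does not, by itself, drop the description of $T_f$ below $2^n$ bits: the encoding length is governed by the gate count $s$, not by how un-lookup-table-like the wiring is. Two circuits of identical size but radically different structure encode in the same number of bits under the scheme you invoke, so ``excess sharing'' yields no compression contradiction. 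To make the argument go through you would need an encoding in which structural deviation from Lupanov \emph{strictly} shortens the description of $T_f$ below the entropy floor, and it is far from clear such a scheme exists. You correctly flag formalizing ``lookup-table structure'' as the main obstacle, but the specific invariant you propose (fan-out exceeding Lupanov's prediction) does not deliver the needed saving.
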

This prediction is not new: it was already implicitly known through Lupanov's representation~\cite{Lupanov1958}. However, the framework reveals why this structure is optimal: it's similar to the most efficient way to ``print" a truth table using logical gates.
\begin{construction}
For any Boolean function $f : \{0,1\}^n \rightarrow \{0,1\}$, define the canonical lookup table circuit as follows:
\begin{itemize}
\item For each input pattern $x$ where $f(x) = 1$
\begin{itemize}
\item Create a ``pattern detector": an $n$-input $AND$ gate with appropriate negations.
\item This gate outputs 1 if and only if the input equals $x$.
\end{itemize}
\item Combine all detectors:
\begin{itemize}
\item $OR$ together all pattern detector outputs.
\item This produces 1 if input matches any pattern where $f(x) = 1$.
\end{itemize}
\end{itemize}
\end{construction}
\begin{construction}
As an example, consider the 3-variable function $f$ with $f(101) = f(110) = 1$ and $f(x) = 0$ otherwise:
$$\text{Detector}_1 = AND(x_1, NOT(x_2), x_3)$$
$$\text{Detector}_2 = AND(x_1, x_2, NOT(x_3))$$
$$\text{Output} = OR(\text{Detector}_1, \text{Detector}_2)$$
This is close to a direct ``printing" of part of the truth table in circuit form, but it clearly scales poorly, so we should try to do better.
\end{construction}
\begin{prop}
\textbf{(Lupanov Representation)} Any Boolean function $f$ on $n$ variables can be computed by a circuit of size at most $2^n/n + o(2^n/n)$.\\
The construction in the canonical form naively uses $O(n\cdot2^n)$ gates (one detector per true row, each with $n$ gates). Lupanov's optimization organizes this like a decision tree: check variables sequentially, reusing intermediate comparisons across branches. For brevity and sake of not reinventing, we will not go into great detail of the construction of the representation, but cover enough to connect to the intuition.\\\\
Fix a block size $k=\lceil\log n\rceil$. Split the input as
$x=(u,v)$ where $u\in\{0,1\}^k$ (the ``address'' block) and
$v\in\{0,1\}^{n-k}$ (the ``payload'' block). For each $u\in\{0,1\}^k$, let
$f_u(v)=f(u,v)$ denote the $u$-cofactor.\\
\noindent\textbf{Step 1: Shared equality bank for the address.}
Build a bank of $2^k$ \emph{equality detectors}
\[
  E_u(x) \;:=\; [\,x_{1..k}=u\,],
\]
each computed as an $O(k)$-gate tree from the $k$ address bits
(using $\neg$ and $\wedge$ to check each bit, then an $\wedge$-tree to combine).
These $E_u$’s are \textbf{\emph{reused}} globally. The total cost of this bank is
$O(2^k\cdot k)=O(n \log n)$ since $k \approx \log n$.\\
\noindent\textbf{Step 2: Block-Shannon expansion.}
We write
\[
  f(x) \;=\; \bigvee_{u\in\{0,1\}^k} \big( E_u(x)\;\wedge\; f_u(v) \big).
\]
This is a standard Shannon expansion but performed on a \emph{block}
of $k=\Theta(\log n)$ variables at once; the savings come from sharing
the $E_u$’s across the entire circuit.\\
\noindent\textbf{Step 3: Realizing all cofactors $f_u$ uniformly.}
Each cofactor is a function on $n-k$ inputs. For the cost analysis,
we do not synthesize each $f_u$ separately. Instead, we \emph{repeat the same
two-level trick} on $v$:
choose a second block of size $k'=\lceil\log_2(n-k)\rceil$, split
$v=(w,z)$ with $|w|=k'$, and precompute a second equality bank
$\{F_t(v)=[\,w=t\,]\}_{t\in\{0,1\}^{k'}}$ (shared across all $u$’s).
Then expand
\[
  f_u(v) \;=\; \bigvee_{t\in\{0,1\}^{k'}} \big( F_t(v)\;\wedge\; \lambda_{u,t}(z) \big),
\]
where each $\lambda_{u,t}$ is now a cofactor on $n-k-k'$ inputs.
Iterate this block-expansion on the remaining inputs until no variables remain.
At the final layer, the leaves are constants $\in\{0,1\}$: they are just the
bits of the truth table of $f$ arranged by blocks.\\
\noindent\textbf{Cost accounting (the ``short-circuited lookup table”).}
At each layer $\ell$ we:
(i) precompute a shared equality bank for a block of size
$k_\ell=\Theta(\log n_\ell)$, where $n_\ell$ is the number of inputs
remaining at that layer; this costs $O(2^{k_\ell}k_\ell)=O(n_\ell \log n_\ell)$ gates;
(ii) feed these equalities into a shallow $\vee/\wedge$ spine that
\emph{selects} the appropriate subtable slice. The number of layers is
\[
  L \;=\; \frac{n}{\Theta(\log n)}\,(1+o(1)).
\]
Across the entire construction we still ``expose'' all $2^n$ leaf bits
of the truth table, but every layer fans them forward through a small
\emph{shared} equality bank rather than through $2^n$ disjoint minterms.
A standard summation over the layers (choosing $k_\ell\asymp\log n_\ell$)
yields the total gate count
\[
  (1+o(1))\,\frac{2^n}{n}.
\]
Intuitively: a naive lookup table wires $\Theta(2^n)$ minterms; the block-Shannon
scheme \emph{batches} $\Theta(\log n)$ variables at a time, amortizing
the address cost so that each of the $2^n$ table bits is delivered
through only $\tilde O(1)$ additional gates on average, giving the
$2^n/n$ factor.
\end{prop}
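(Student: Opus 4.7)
The plan is to formalize the block-Shannon decomposition sketched in the statement and carry out the gate counting carefully enough to extract the leading $2^n/n$ constant, rather than merely an $O(2^n/n)$ bound. The skeleton is the same as in the sketch; the real work is parameter tuning so that every auxiliary cost provably falls into the $o(2^n/n)$ slack.

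First, I would fix a top-level address size $k \sim \log_2 n$, split the input as $x=(u,v)$ with $|u|=k$ and $|v|=n-k$, and build the shared bank of $2^k$ equality detectors $\{E_u\}$ via a trie-style circuit that reuses common prefix $\wedge$-products. This costs $O(2^k k) = O(n \log n)$ gates, comfortably $o(2^n/n)$, and is the only cost I would incur on the address variables.

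Next, rather than synthesizing each cofactor $f_u$ independently, I would realize the family $\{f_u\}_{u \in \{0,1\}^k}$ collectively. Pick a second parameter $s$ slightly below $\log_2 n$, partition $\{0,1\}^{n-k}$ into $2^{n-k-s}$ column blocks of width $2^s$, and for each column block build \emph{one} universal module that exposes all $2^{2^s}$ possible $s$-variable Boolean functions. Choosing $s$ so that $2^{2^s}$ is polynomial in $n$ keeps the total cost of these universal modules at $o(2^n/n)$. Each cofactor $f_u$ is then assembled as the OR of $2^{n-k-s}$ selections, one per column block, each selection a single wire into the appropriate universal output determined by the truth table of $f_u$ restricted to that block. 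Combining with $f = \bigvee_u E_u \wedge f_u$ on top, the dominant term is the selection/OR spine, which carries $2^k \cdot 2^{n-k-s} = 2^{n-s}$ wires at $O(1)$ gates apiece; tuning $s = \log_2 n - \Theta(\log\log n)$ turns this into $(1+o(1)) \cdot 2^n/n$.

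The main obstacle, and the real content of the Lupanov argument, is this parameter balancing rather than the construction itself. Three cost terms (equality bank, universal module bank, selection spine) move in opposing directions as $k$ and $s$ vary: shrinking $s$ enlarges the spine but shrinks the universal bank, while enlarging $s$ does the reverse and risks making $2^{2^s}$ superpolynomial. The sharp optimum forces both $k$ and $s$ to sit just below $\log_2 n$, and verifying that the two overhead terms are \emph{genuinely} $o(2^n/n)$ while the spine lands at exactly $(1+o(1))\cdot 2^n/n$ reduces to a careful but routine Taylor expansion in $(\log_2 n)/n$, which is what the $o(2^n/n)$ in the stated bound absorbs.
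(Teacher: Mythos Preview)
Your route differs from the paper's: the paper iterates a block-Shannon expansion, peeling off $\Theta(\log n)$ address variables per layer with shared equality banks, whereas you propose a two-level scheme built around ``universal modules'' that precompute all $2^{2^s}$ Boolean functions of an $s$-variable sub-block of the payload. This is closer in spirit to how Lupanov's argument is usually presented in textbooks, so the instinct is reasonable, but the execution has a real gap.

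The problem is that your two constraints on $s$ are mutually incompatible. For the universal-module bank to be cheap you ask that $2^{2^s}$ be polynomial in $n$, which forces $2^s = O(\log n)$ and hence $s = O(\log\log n)$. For the selection spine $2^{n-s}$ to land at $(1+o(1))\,2^n/n$ you need $2^s = (1+o(1))\,n$ and hence $s \approx \log_2 n$. Your stated choice $s = \log_2 n - \Theta(\log\log n)$ satisfies neither: it gives $2^{2^s} = 2^{\,n/(\log n)^{\Theta(1)}}$, which is enormously super-polynomial, while the spine becomes $(2^n/n)\cdot(\log n)^{\Theta(1)} = \omega(2^n/n)$. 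No single value of $s$ reconciles these, and because $2^s$ is constrained to powers of two you cannot in any case hit the leading constant $1$ for general $n$. The genuine Lupanov mechanism you are missing is different: one partitions the $2^k$ \emph{rows} of the $2^k\times 2^{n-k}$ truth-table matrix into groups of size $s$ (a linear parameter, tunable to $s\sim n$), and uses that within each row-group the $s$-bit column ``signatures'' \emph{partition} the $2^{n-k}$ columns, so that all signature-indicator functions for one group together cost only $O(2^{n-k})$ OR gates built from the column minterms. Summing over $2^k/s$ groups yields the dominant $2^n/s \sim 2^n/n$ term with the correct constant; your column-block universal modules lack this partition structure, which is exactly why the accounting cannot close at $(1+o(1))\,2^n/n$.
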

\begin{remark}
Since the above representation is asymptotically optimal, intuitively, we can say that for most boolean functions, ``optimized brute-force enumeration is optimal." Quite literally, the best circuit description for most truth tables is the table itself: a physical manifestation of brute-force.
\end{remark}
\textbf{The prediction confirmed}: Shannon's and Lupanov's results together say that, for the very hardest functions, optimal circuits are essentially optimized lookup tables: compressed representations that approach the theoretical limit of incompressibility. The analogy to SSDs is even more interesting in the case of Lupanov's representation because it uses ``address blocks" and ``payload blocks." In this sense, what looks like clever algebraic manipulation might be much simpler in description: it is just the best possible way to wire the truth table into gates. In this view, non-uniform efficient computability is not merely related to description length — it is defined by it: a function is efficiently computable in the circuit model precisely to the extent that its truth can be compressed into a small circuit. Time, in this picture, shows up only as a secondary constraint on how those compressed descriptions are discovered or used, not on what is in principle computable.
\begin{remark}
Consider some analogies:
\begin{itemize}
    \item Kolmogorov incompressibility (infinite-scale): Most strings have no short program. You can't, in general, prove or compute ``this string has no short program" beyond a provability threshold.
    \item Circuit incompressibility (finite model): Most truth tables have no small circuit. They're best computed by an optimal short-circuited LUT. Asking for a general procedure that says ``this specific table has no small circuit" (MCSP, or strong lower bounds) is like asking for a finite-scale analog of those incompressibility judgments.
    \item ``If I had a solver for these finite-scale incompressibility instances, why would I expect it to be anything but brutally hard?"
\end{itemize}
\end{remark}
\subsection{Revisiting the Natural Proofs Barrier}
In 1993, Razborov and Rudich \cite{RazborovRudich97} analyzed successful circuit lower bound techniques from the 1980s and showed that if these techniques were used to prove separations like $P \neq NP$, then we could use the properties of those proofs to obtain faster distinguishers of random functions from pseudo-random functions.\\
\textbf{Recall Natural Proofs}:
\begin{itemize}
\item A natural lower bound is based on a property $P$ of Boolean functions that is:
\begin{itemize}
\item \textbf{Constructive}: membership in $P$ is decidable in “easy” time (poly).
\item \textbf{Large}: a random function has $P$ with noticeable probability.
\item \textbf{Useful}: every function in $P$ avoids small circuits.
\end{itemize}
\item Natural Proofs say: such $P$ contradicts strong PRGs, so we shouldn’t expect them if Crypto is real.
\end{itemize}
The ``Incompressibility of Truth" framing says something more philosophically biting and implies a perhaps more substantial barrier.
\begin{itemize}
\item A constructive property $P$ that captures many hard functions is automatically a kind of logical compression:
\begin{itemize}
\item The algorithm deciding $P$ + a short index $i$ can serve as a succinct description of “the $i$-th function in the big $P$-set.”
\end{itemize}
\item If the hard functions are truly ``incompressible truth,” then having a short constructive handle on them is already suspicious: you've grouped many logically ``random” objects into one low-complexity logical pattern.
\end{itemize}
Trying to find a constructive property that explains why a hard function is hard seems like a dead-end from the information-theoretic view if the property is too common. Suppose one crafts a constructive proof that proves a hard function is hard (``incompressible"). In that case, that proof is itself a concise logical description of the supposedly logically incompressible information inside the truth tables of the hard function. A would-be prover revealed a way to compress the incompressible: exactly what they didn't want to prove.
\subsection{Revisiting Proof Complexity Struggles}
We can also consider how the ``Incompressibility of Truth" framing explains struggles in Proof Complexity with understanding strong proof systems. If a tautology truly is hard to prove, then it should be logically incompressible. Having an efficient general method to deterministically find/generate hard tautologies would intuitively contradict the tautology's hardness because the efficient method is a polynomial (``compressed") logical description generating and explaining objects that need super-polynomial (``incompressible") logical descriptions.
\begin{construction}[Lifting the Paradox to Circuit Frege Proof Systems]
Encode the contradiction ``There exists a circuit (description), $C$, less than size $s$ describing the truth table, $T$".\\
\[
DESC_{<s}(T) \;:=\;
\exists C \,\Bigl(
  |C| < s \;\wedge\;
  \forall x \in \{0,1\}^n\; C(x) = T[x]
\Bigr)
\]
\[
\text{TT}_{T,s} \;:=\; \text{Encode}\bigl(DESC_{<s}(T)\bigr)
\]
\[
s < \mathsf{DESC}(T)
\;\Longrightarrow\;
\text{TT}_{T,s} \text{ is unsatisfiable (a propositional contradiction).}
\]
Observe that if there were a ``small" (polynomial) Circuit Frege refutation for one of these such contradictions where $\text{DESC}(T)$ is large as it corresponds to an informationally-large table, $T$, it appears the proof is itself an encoding of the small circuit (description) of the table that it simultaneously proves doesn't exist.
\end{construction}
In essence, if there is a polynomial-bounded proof system that proves, for every $T$, that ``$T$ is incompressible,” then that proof system looks to be compressing the incompressible. In Section \ref{sec:explicit}, we provide functions that are maximally hard for circuits, and so they are likely good candidates for hard $MCSP$ tautologies/contradictions outlined above.
\section{Investigating MCSP}
Having seen how the information-theoretic framework illuminates classical results, we now examine the Minimum Circuit Size Problem (MCSP) that has long had a mysterious complexity. The decision version of this problem asks, given a truth table, $T$, and a size bound, $s$, whether the table can be computed by a circuit of at most $s$ gates. MCSP is a meta-problem that has been studied extensively for its connections to cryptography, learning theory, and derandomization.
\begin{definition}
\textbf{(Classical MCSP)}
The Minimum Circuit Size Problem is the language $\{(T, s) : T \text{ is a truth table on } n \text{ variables and there exists a circuit }$ $C \text{ with } |C| \leq s \text{ computing T}\}$.
\end{definition}
\begin{lemma}
MCSP is in $NP$ (the class of problems where solutions are verifiable in polynomial time in the size of its input \cite{cook2000pvsnp}).
\end{lemma}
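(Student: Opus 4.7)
The plan is to exhibit an \textbf{NP}-verifier whose witness is simply the candidate circuit $C$ itself, and then to check that the encoding length of this witness and the cost of the verification are both polynomial in the input size $|T|+\log s = 2^n + \log s$.

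First I would normalize the input. By the Lupanov representation from the preceding Proposition, every $n$-variable function admits a circuit of size $(1+o(1))\,2^n/n$, so if the instance has $s \ge 2^n/n$ (say) the answer is automatically ``yes'' and the verifier can accept without consulting the witness at all. Consequently I may restrict attention to the regime $s \le 2^n$, in which case the standard gate-list encoding discussed in the earlier Observation fits in $O(s \log s) = O(2^n \cdot n)$ bits, which is polynomial in the truth-table length $2^n$ and hence polynomial in the input size.

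Next I would describe the verifier. Given $(T,s)$ together with an alleged circuit $C$ encoded as a topologically-ordered list of gates (type plus two input pointers), the verifier (i) parses $C$ and rejects if the encoding is malformed or if $|C| > s$; (ii) for each $x \in \{0,1\}^n$, evaluates $C(x)$ by propagating values through the gate list and compares the result to the bit $T[x]$, accepting iff all $2^n$ checks agree. Each evaluation costs $O(s)$ gate-operations, so the total verification time is $O(s \cdot 2^n) = O(|T|^2)$. Completeness is immediate: a genuine witnessing circuit of size $\le s$ makes the verifier accept. Soundness is equally immediate: any accepted witness is by construction a circuit of size at most $s$ that agrees with $T$ on every input, which is exactly what MCSP requires.

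The only mild subtlety — and what I expect to be the ``hardest'' step in an otherwise routine argument — is the convention for how $s$ is presented: if $s$ were allowed to be given in binary and to exceed $2^n$, then the clause $|C|\le s$ would be vacuous and the witness-length bound would break. This is exactly why the Lupanov truncation step above is necessary, and it is the standard reason MCSP is polynomial-time equivalent whether $s$ is encoded in unary or in binary. With that caveat dispatched, MCSP fits cleanly into the \textbf{NP} template of ``short certificate plus efficient verifier.''
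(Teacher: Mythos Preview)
Your proposal is correct and follows essentially the same approach as the paper: the witness is the circuit $C$, and verification consists of checking $|C|\le s$ and then evaluating $C$ on all $2^n$ inputs in time $O(s\cdot 2^n)$, which is polynomial in $|T|=2^n$. You are in fact more careful than the paper about bounding the witness length (via the Lupanov truncation when $s$ is large and the discussion of how $s$ is encoded), a point the paper's own proof simply omits.
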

\begin{proof}
The certificate to verify a YES instance is the circuit, $C$, where $|C| \leq s$, that exists to compute the truth table $T$.\\
\textbf{Verification algorithm}:
\begin{itemize}
\item First check that $|C| \le s$ (count gates).
\item For each input $x \in \{0, 1\}^n$, evaluate $C(x)$ and check if $C(x) = T[x]$.
\item Accept if all $2^n$ checks (all elements in $\{0, 1\}^n$) pass.
\end{itemize}
\textbf{Time complexity}:
\begin{itemize}
\item Step 1: Checkable in $O(|C|)$ by counting each gate in the data structure holding the gates.
\item Step 2: $2^n$ iterations, each evaluation takes $O(|C|)$ time.
\item Total: $O(2^n \cdot s)$ iterations, each evaluation takes $O(|C|)$ time.
\end{itemize}
Since the input size is $|T| = 2^n$ bits, this is polynomial in the input size. Therefore, $MCSP \in NP$.
\end{proof}
\begin{remark}
Notice what verification does: it recovers the truth table from the circuit description. This is precisely the ``from $C \rightarrow T$" direction of \cref{prop:first} (Information Equivalence). The certificate (small circuit) is a compressed description of the truth table. Verification is the decompression process: evaluate the circuit on all inputs to recover the full truth table, then check it matches. This is the same information recovery we used to establish circuits as descriptions.
\end{remark}
\subsection{The Information-Theoretic Self-Reference}
In the classical view of MCSP, instances of the problem ask statements of the form: ``Given a function's truth table, can it be computed efficiently?" However, this algorithmic view obfuscates what the problem is truly asking.\\
\textbf{From the information-theoretic view, this is really asking: ``Given a $2^n$-bit data string $T$, can it be described using at most $s$ logical operations?"}
\begin{observation}
In the information-theoretic view, MCSP is more accurately named the \textbf{Minimum Truth Table Description Problem}. It asks whether data (truth table) admits a compressed logical description (circuit less than a specific size).
\end{observation}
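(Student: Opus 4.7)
The plan is to derive this observation directly from the information-equivalence proposition (\cref{prop:first}) together with the encoding-cost bookkeeping already developed for circuits. First I would make the claim precise by specifying what ``description'' means here: a \emph{logical description} of a $2^n$-bit string $T$ is any binary encoding of a circuit $C$ over the fixed basis $B$ together with the convention that $T$ is recovered by evaluating $C$ on all inputs in lexicographic order. Under this convention the observation becomes the concrete statement that MCSP instances $(T,s)$ are in bijective correspondence with instances of the decision problem ``does $T$ admit a logical description coming from a circuit of at most $s$ gates?''.

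Then I would proceed in three steps. Step one: invoke \cref{prop:first} in the $C\!\to\!T$ direction to argue that any circuit $C$ with $|C|\le s$ witnessing a YES-instance is a lossless encoding of $T$, since evaluation on all $2^n$ inputs reconstructs $T$ exactly. Step two: apply the encoding bound from the earlier observation, so that such a $C$ can be serialized as an $O(s\log s)$-bit string; thus a YES-certificate is literally a binary description of $T$ of length $O(s\log s)$. Step three: reinterpret the size parameter $s$ as a compression threshold, noting that the compression ratio $p(s,n)=s\log s/2^n$ is less than $1$ precisely when $s\log s = o(2^n)$, so the nontrivial regime of MCSP is exactly the regime in which a small circuit would compress $T$.

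The main obstacle I expect is not mathematical but definitional: one must pin down a single ``universal'' description scheme so that the phrase \emph{description length} is an intrinsic property of $T$ rather than an artifact of how gate-lists are serialized. This is exactly analogous to the choice of universal machine in Kolmogorov complexity, and any two reasonable gate-list encodings over the same functionally complete basis will differ only by an additive $O(\log s)$ term per gate, which is absorbed in the $O(s \log s)$ accounting. Once the encoding convention is fixed, the observation reduces to rereading the classical MCSP definition through the lens of \cref{prop:first}, with the verifier's evaluation step playing the role of decompression.
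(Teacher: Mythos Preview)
Your proposal is sound, but note that in the paper this observation is not given a standalone proof at all: it is stated as a terminological reframing, with the justification already absorbed into the surrounding text (the sentence immediately preceding it and the remark after Lemma~3.1 on verification-as-decompression). The ingredients you invoke---\cref{prop:first} in the $C\to T$ direction, the $O(s\log s)$ encoding bound, and the identification of the NP verifier with decompression---are exactly the ones the paper uses informally in that surrounding discussion, so your approach is essentially the paper's, only made more explicit. Your Step~3 (recasting $s$ as a compression threshold via $p(s,n)$) and the discussion of encoding-universality go somewhat beyond what the paper bothers to argue for this observation, but they are consistent with the framework and do no harm.
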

\begin{remark}
This reframing reveals MCSP's meta-nature: it's a problem about logical \textbf{describability}. The circuit deciding MCSP is itself a description, one that must encode which truth tables are compressible and which are not.
\end{remark}
\textbf{This is a description describing descriptions, and immediately creates a logical tension.}
\subsection{The Information-Theoretic Impossibility}
To understand MCSP's hardness, we must be careful about which size bounds create difficulty.
\begin{observation}\textbf{(Spectrum of Compressibility)}
Truth tables fall into a spectrum:
\begin{itemize}
\item \textbf{Highly compressible}: Circuit size $O(n^k)$ for small k. Examples include parity, 2-SAT, addition, and multiplication. These are ``P-like" functions with exploitable logical structure and are likely to be problems we care about.
\item \textbf{Slightly compressible}: Circuit sizes like $2^{n/k}$ for moderate $k$. Some structure, but exponential complexity. An intermediate regime that are likely more condensed lookup tables.
\item \textbf{Essentially incompressible}: Circuit size $\ge2^n/n$. Random-like functions that essentially require a lookup table representation.
\end{itemize}
\end{observation}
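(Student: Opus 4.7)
The plan is to justify the observation by verifying that each of the three regimes is non-empty, and that together they give a faithful picture of the compressibility landscape. Since the statement is essentially a classification rather than a single sharp bound, the proof reduces to producing representatives in each bucket and pointing out how the results already proved above populate two of the three regimes almost for free.

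For the \textbf{highly compressible} regime, I would simply exhibit the named examples as explicit polynomial-size circuits: parity admits a chain of $O(n)$ XOR gates, addition of two $n$-bit integers admits an $O(n)$ ripple-carry circuit, multiplication is known to be $n^{1+o(1)}$ via FFT-style constructions, and 2-SAT reduces to strongly-connected-component computation on the implication graph, which is polynomial. Each gives a concrete $O(n^k)$ upper bound. For the \textbf{essentially incompressible} regime, I would directly invoke \cref{thm:fundamental}: a $1-o(1)$ fraction of all Boolean functions require circuits of size $\Omega(2^n/n)$, so the regime is not only non-empty but in fact generic. For the \textbf{slightly compressible} regime, I would use a padding argument. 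Fix a moderate $k$ and let $g$ be a function on $m := n/k$ variables with near-maximal complexity $\Theta(2^m/m)$, which exists by \cref{thm:fundamental}. Define $f(x_1,\dots,x_n) := g(x_1,\dots,x_m)$, ignoring the remaining $n-m$ inputs. Hard-wiring $x_{m+1} = \dots = x_n = 0$ in any circuit for $f$ produces a circuit for $g$ of no larger size, so $f$ inherits a lower bound $\Omega(2^{n/k}/(n/k))$; the matching upper bound $O(2^{n/k}/(n/k))$ follows by applying Lupanov's construction to $g$. This lands $f$ squarely in the intermediate regime at any chosen moderate $k$.

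The hardest part is not mathematical depth but quantifier discipline: the informal phrases \emph{small $k$} and \emph{moderate $k$} need to be pinned down so the three regimes neither overlap at the boundaries nor leave gaps (consider, e.g., functions of complexity $2^{(\log n)^{O(1)}}$, which slip between ``polynomial'' and ``$2^{n/k}$''). I would address this by formalizing the observation as a parameterized trichotomy, where each regime is a family of complexity classes indexed by $k$, and each truth table is assigned to whichever regime best matches its actual minimum circuit size. With that convention in place, the constructions above show that representatives populate the spectrum continuously from polynomial up to the $2^n/n$ ceiling established by Shannon's bound.
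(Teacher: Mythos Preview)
The paper offers no proof of this observation at all: it is stated as a descriptive, motivational remark and the text immediately moves on to discuss search-MCSP. So there is nothing in the paper to compare your argument against; you have supplied substantially more justification than the author does.

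Your proposal is mathematically sound. The padding construction for the intermediate regime is the right idea and is correctly argued: restricting the dummy variables gives the lower bound, and Lupanov on the $m$-variable core gives the matching upper bound, landing the function at $\Theta(2^{n/k}/(n/k))$. The examples for the polynomial regime are fine, and invoking \cref{thm:fundamental} for the incompressible regime is exactly what the paper's own development supports. Your final paragraph on quantifier discipline is a genuine improvement over the paper's informal phrasing: the paper never addresses the boundary cases (quasipolynomial, $2^{n^{o(1)}}$, etc.) at all, and simply leaves ``small $k$'' and ``moderate $k$'' undefined. In short, your write-up would make the observation into an actual proposition with content, whereas in the paper it functions only as informal scene-setting.
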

Despite MCSP's extensive study, it is still an open problem whether decision-MCSP lets you solve the search variant of MCSP in polynomial time.
\begin{definition}
$MCSP_{SEARCH}$ is the search variant of MCSP where we want a solver to give us the minimum circuit, $C$, describing a table, $T$, instead of only telling us that such a circuit exists below some size.\\
It is clear that $SAT$ could help polynomially solve this problem as $SAT$ can encode a satisfying assignment that encodes some valid circuit $C$ describing a table $T$.
\end{definition}
\begin{conjecture}
$MCSP_{SEARCH}$ requires circuits of super-polynomial size (in the input size $2^n$).\\
\textbf{Intuition:}\\
If a circuit $C$ of size $poly(2^n)$ solves $MCSP_{SEARCH}$, then $C$ is effectively a short circuit description that encodes the shortest circuit description of every table ($2^{2^n}$ of them) on $n$ variables." This is like asking $C$ to ``compress the incompressible."
\end{conjecture}
\begin{proof}[Proof idea]
By Shannon's bound \cref{thm:fundamental}, most truth tables $(\Omega(2^{2^n}))$ are incompressible. Consider a circuit, $C_n$, that computes $MCSP_{SEARCH}$ for truth tables on $n$ variables.
Circuit $C_n$ must ``know" the shortest circuit description of every truth table on $n$ variables, and this knowledge must be encoded in $C_n$'s structure (its gates and wiring) as we can use $C_n$ as a description to recover all the shortest table descriptions. Again, \cref{prop:first} Information Equivalence is how we losslessly recover the information encoded by a circuit, and it's how we verify a YES instance of $MCSP$.\\
A circuit with $poly(2^n)$ gates can be described in:
$O(poly(2^n) \cdot \log(poly(2^n)))$ bits.
For any polynomial function, $poly(n)$, and sufficiently large $n$:\\
$O(poly(2^n) \cdot \log(poly(2^n)) \ll \Omega(\frac{2^{c\cdot2^n}}{2^n}\cdot\log(\frac{2^{c\cdot2^n}}{2^n}))$ where $\frac{1}{2} > c > 0$ and represents a sensitive ``entropy fence" the circuit can't pass.\\The bound here is based on the premise that $C_n$ could use clever short-circuit lookup table strategies on information that we know from Lupanov's representation admits short-circuit lookup table structures.
Therefore, a polynomial-sized $C_n$ cannot encode enough information to correctly classify all minimum descriptions of truth tables on $n$ variables, as it would be an impossible compression device for sufficiently large $n$, bypassing the inherent incompressible patterns in the many, many $2^{2^n}$ tables it describes. If $C_n$ did compress them, it would contradict the fact that most of the tables it describes are incompressible in the scheme (circuit logic) it tries to use to compress them, offering a faster way of ``solving" the tables in parallel than what should be possible based on Shannon's bound.
\end{proof}
With a relatively ``small" $MCSP_{SEARCH}$ circuit, we could input any data we'd like as bits (a truth table, a book, DNA, etc) and it would efficiently output the most efficient way to store it in the circuit logic scheme. This makes $MCSP_{SEARCH}$'s efficiency not only worthwhile for problem solving, but also for essentially optimal lossless information storage. An efficient $MCSP_{SEARCH}$ circuit is an efficient universal data compressor, which sounds like a fantasy.
\begin{remark}
The impossibility is fundamentally self-referential. To know the exact shortest description of all truth tables on $n$ variables, the solver circuit must encode information about incompressibility, but that very information is already incompressible. As an analogy: \textbf{Imagine I have a book that describes the most efficient way to summarize every other book, including ones of noise. How could that book do so without essentially admitting all the most efficient patterns shared in every other book?}
\end{remark}
\begin{conjecture}
$P \ne NP$ due to entropy.
\begin{proof}[Proof idea]
Assume, for the sake of contradiction, $P = NP$, then we could solve $MCSP_{SEARCH}$ with polynomial circuits. However, $MCSP_{SEARCH}$ requires super-polynomial circuits. \\Therefore, our assumption was wrong, $P \neq NP$.
\end{proof}
\end{conjecture}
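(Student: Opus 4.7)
The plan is to derive a contradiction by chaining the previous conjecture's super-polynomial lower bound on $MCSP_{SEARCH}$ with the standard fact that $P = NP$ collapses the search variant of every NP problem into polynomial time, and hence into polynomial-size circuits. The overall shape is: assume $P = NP$, upgrade this to ``$MCSP_{SEARCH}$ admits polynomial-size circuits (in its input length $2^n$),'' then invoke the prior conjecture to reach a contradiction.

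First, I would fix the input size to be $N = 2^n$, the bit-length of a truth table on $n$ variables, so that ``polynomial'' consistently means polynomial in $N$. Next, apply the earlier lemma placing classical MCSP in $NP$; under $P = NP$ this yields a polynomial-time decision algorithm for MCSP. Second, promote decision to search. MCSP's witnesses (small circuits) are syntactic objects that can be committed to gate-by-gate, so a standard self-reduction works: binary-search over $s$ to identify the minimum size, then fix one gate at a time (its operation type and its two input wires) and use the decision oracle to check whether a completion of size at most the remaining budget still exists. Each of the polynomially many queries is polynomial-time, so $MCSP_{SEARCH} \in P$. Third, apply the standard Cook-Levin style simulation to turn the uniform polynomial-time algorithm into a uniform family of polynomial-size Boolean circuits $\{C_N\}$. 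This contradicts the prior conjecture that $MCSP_{SEARCH}$ requires super-polynomial circuits in $N = 2^n$, and discharging the assumption gives $P \neq NP$.

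The main obstacle is that the argument is only as strong as the prior conjecture it leans on, and that conjecture was justified by an information-theoretic sketch rather than a fully rigorous proof. The delicate point is that the reduction from $P = NP$ to polynomial circuits produces a family of circuits, and the incompressibility bound must apply to \emph{any} such family, uniform or not; fortunately the entropy/description-length argument in the prior conjecture is already stated against arbitrary circuits, so this lines up. A secondary subtlety is confirming that the self-reduction actually extracts a \emph{minimum} circuit rather than merely some small one: one must keep the size budget tight during the gate-by-gate extraction so that the final output is certified minimum, not just sub-$s$. This is standard bookkeeping, but it is the place where sloppiness would weaken the conclusion from ``$MCSP_{SEARCH}$-solver'' down to a mere ``small-circuit-finder,'' which would no longer trigger the incompressibility contradiction cleanly.
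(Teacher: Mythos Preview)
Your plan matches the paper's skeleton exactly: assume $P=NP$, derive polynomial-size circuits for $MCSP_{SEARCH}$, and collide with the prior conjecture's super-polynomial lower bound. The paper's own ``proof idea'' is only three sentences and does not spell out the middle step at all, so what you have written is a strict elaboration of the same argument rather than a different one.

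One point worth tightening: you bridge decision to search by a gate-by-gate self-reduction ``using the decision oracle,'' and your surrounding text strongly implies this oracle is the classical MCSP decision oracle. The paper explicitly warns, just above the conjecture, that it is \emph{open} whether decision-MCSP polynomially solves search-MCSP; the queries in your self-reduction (``can this partial circuit be completed within the remaining budget to compute $T$?'') are not MCSP instances. Under the standing assumption $P=NP$ this is harmless, since those completion queries are still in NP and hence in $P$, but you should phrase it that way rather than as a self-reduction through MCSP itself. The paper's implicit route is instead to go through SAT directly (encode ``$\exists$ circuit of size $\le s$ computing $T$'' as a SAT formula and extract the witness), which sidesteps this issue entirely.

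Your final caveat is exactly right and mirrors the paper's own stance: everything here is only as strong as the preceding conjecture on $MCSP_{SEARCH}$, which is supported by a heuristic entropy argument rather than a proof. What you have produced, like the paper, is a clean reduction between conjectures, not a proof of $P\neq NP$.
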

\begin{proposal} To attempt to falsify the hypothesis above, we can construct multi-valued truth tables that capture the complexity of $MCSP_{SEARCH}$.\\
\textnormal{Construct the table of the function}, $f: \{0, 1\}^{2^n} \rightarrow Enc(C_n)$, \textnormal{where for each row,} $Enc(C_N)$ \textnormal{is the bit encoding of the optimal circuit description of the} $\{0, 1\}^{2^n}$ \textnormal{bits on the left-side of the row for a fixed basis.}\\
\textnormal{If the minimal circuits obtained for $f$ look like an exhaustive description of every pattern in the tables, forced to pay an entropy bill rather than be a concise algorithm, the hypothesis survives for small} $n$.\\
We can tap into what the ``Book of Summaries" looks like on a small scale.
\end{proposal}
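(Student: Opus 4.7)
The plan is to realize the proposal as a concrete computational experiment that probes the $MCSP_{SEARCH}$ hardness conjecture at the smallest values of $n$ where exact circuit minimization is still tractable. First I would fix a small $n$ (realistically $n=2$ or $n=3$), enumerate all $2^{2^n}$ truth tables on $n$ variables, and for each table compute an exact minimum circuit over the basis $\{\mathrm{AND},\mathrm{OR},\mathrm{NOT}\}$ using a SAT-based exact-synthesis solver (for example, Knuth's or Haaswijk's formulations, or brute enumeration by gate count). Each minimum circuit is serialized under a fixed canonical encoding $Enc(\cdot)$ — topologically sorted gate list, with gate types, fan-in pointers, and lex-minimal relabeling — padded to a uniform bit-width $m(n)$ so that $f:\{0,1\}^{2^n}\to\{0,1\}^{m(n)}$ is a well-defined multi-output Boolean function.

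Next I would assemble the explicit truth table of $f$: its $2^{2^n}$ rows each have $2^n$-bit inputs paired with $m(n)$-bit outputs. I would then apply the same exact-synthesis machinery to $f$ as a multi-output function, minimizing the total number of gates with shared sub-circuits across the $m(n)$ outputs. The key observables are (i) the raw gate count $|C_f|$, compared against the Shannon/Lupanov prediction $(1+o(1))\,2^{2^n}\cdot m(n)/(2^n)$, and (ii) the internal structure of $C_f$, classified by whether its gates group into a shared address-decoder bank feeding a wide $\vee/\wedge$ spine (the short-circuited LUT signature of \Cref{thm:fundamental} and the Lupanov construction), or whether they instead factor into reusable algebraic substructures that would betray a genuine algorithm for $MCSP_{SEARCH}$.

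The main obstacle will be computational cost, and it is severe. Even for $n=3$ the outer function $f$ has $2^{8}=256$ input rows and output width $m(3)$ on the order of tens of bits, so minimizing $f$ is itself a large multi-output exact-synthesis instance (NP-hard, worst-case exponential in the output circuit's size); for $n=4$ the outer instance has $2^{16}$ rows and is likely beyond current exact tools. A secondary obstacle is encoding dependence: different canonicalizations of $Enc(C_n)$ can change $|C_f|$ by polynomial or worse factors, so to keep the experiment interpretable I would either fix one aggressive canonical form and report only that, or repeat the minimization across several natural encodings and compare. A third subtlety is breaking ties among non-unique minima for individual truth tables; I would pick the lexicographically smallest under the canonical encoding so that $f$ is a function rather than a relation.

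If the experiment goes through for $n=2,3$, the payoff is twofold: either the returned $C_f$ visibly exhibits lookup-table structure with compression ratio near $1$, giving concrete finite-scale evidence for the conjecture that $MCSP_{SEARCH}$ requires super-polynomial circuits and that the ``Book of Books'' is itself essentially a table; or, surprisingly, $C_f$ compresses far below the Lupanov envelope and decomposes algorithmically, in which case the information-theoretic argument is falsified at finite scale and must be revisited. Either outcome yields the artifact the proposal asks for — an explicit small-$n$ ``Book of Books'' circuit whose structure speaks directly to the incompressibility intuition.
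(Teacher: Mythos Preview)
Your plan is correct and follows exactly the approach the paper itself sketches: tabulate $f:\{0,1\}^{2^n}\to Enc(C_n)$ for small $n$, run exact circuit minimization on the resulting multi-output function, and read off whether the optimum looks like a short-circuited lookup table or a genuine algorithm. The paper provides no argument beyond the proposal statement itself, so your version is simply a more fleshed-out execution plan; the extra ingredients you supply --- SAT-based exact synthesis, a fixed canonical encoding with lexicographic tie-breaking, multi-output shared-gate minimization, and the explicit cost and encoding-dependence caveats --- are sensible implementation choices that the paper leaves entirely implicit.
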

\section{An Explicit Boolean Function Family That Requires Exponential Circuits} \label{sec:explicit}
We can now use the fact that circuits operate as lossless codes to prove an exponential lower bound on an explicit family of Boolean functions. To do this, we construct the Boolean function family, $\{f_{\Omega_n}\}$, where each $f_{\Omega_n}$ in the set has its output bits equal to the first $2^n$ bits of a Chaitin constant $\Omega_U$. A circuit for this problem decides, given a position of a bit (represented as a row in the truth table) in $\Omega_U$, whether the bit is 0 or 1. Essentially, it solves a finite version of the halting problem. Since the exact bits of $\Omega_U$ depend on the universal prefix-free Turing machine $U$, the $\{f_{\Omega_n}\}$ family is infinite.
\begin{definition}
A Chaitin constant \,$\Omega_U$ is the halting probability of a universal prefix-free Turing machine $U$ \cite{Chaitin1975}, and it is algorithmically random, so the shortest program to output the first $n$ bits of \,$\Omega_U$ must be of size at least $n - O(1)$.
\end{definition}
\begin{theorem}
The boolean function family, $\{f_{\Omega_n}\}$, requires 2-fan-in circuits of $\Omega(2^n/n)$ gates.
\end{theorem}
\begin{proof}
Assume, for the sake of contradiction, for each $n$, there exists a circuit $C_n$ with $O(s)$ gates computing/describing $f_{\Omega_n}$. Then, we can store each $C_n$ with $O(s\log s)$ bits. Then, create a program, $P_n$, that contains the encoding of $C_n$ and decodes it into $C_n$ and decompresses $C_n$ into the first $2^n$ bits of \,$\Omega_U$, and prints them (essentially what the first part of a verifier program for $MCSP$ does, but hardcoded). This means $P_n$ produces the first $2^n$ bits of \,$\Omega_U$ as output with $|P_n| = O(s\log s) + O(1)$.\\
For sufficiently large $n$, $O(s\log s) \ll \Omega(2^n)$ bits, so circuits encoded in $O(s\log s)$ bits for $\{f_{\Omega_n}\}$ would let us compress incompressible truth: a contradiction. Thus, the only way out is $s = \Omega(2^n/n)$.
\end{proof}
\section{Additional Questions to Explore}
Based on the results and conjectures above, the following problems would be interesting.
\begin{itemize}
    \item Define $\Omega_U^t$ as the probability that a universal prefix-free Turing machine $U$ halts within $t$ steps. If we know $\Omega_U^t$, after enough programs for $U$ have halted within $t$ steps, we know the rest do not, so $\Omega_U^t$ is like a bounded oracle, and its bits should be very pseudo-random.\\\textbf{How compressible is $\Omega_U^t$ in circuit form?}\\As $t$ approaches $\infty$, $\Omega_U^t$ becomes closer to $\Omega_U = \Omega_U^{\infty}$.
    \\We can also consider how the truth tables of bounded-time halting problems can be formulated to encode $\Omega_U^t$:
    \begin{itemize}
        \item Recall the classic \textbf{P-Complete} and \textbf{EXPTIME-Complete} problems:
        \begin{itemize}
            \item $BH_{UNARY} = \{\langle M,x,1^t \rangle : M \textnormal{ accepts } x \textnormal{ within } t \textnormal{ steps}\}$
            \item $BH_{BINARY} = \{\langle M,x,t \rangle : M \textnormal{ accepts } x \textnormal{ within } t \textnormal{ steps}\}$ where $t$ is encoded in binary.
        \end{itemize}
        \item Now consider alternative formulations of them where $U_0$ is fixed and push $y :=\langle M,x\rangle$ as an encoding into the input string:
        \begin{itemize}
            \item $UBH_{UNARY}(U_0) = \{\langle y,1^t \rangle : U_0(y) \textnormal{ accepts within }  t \textnormal{ steps}\}$
            \item $UBH_{BINARY}(U_0) = \{\langle y,t \rangle : U_0(y) \textnormal{ accepts within }  t \textnormal{ steps}\}$ where $t$ is encoded in binary.
            \item Since $U_0$ can polynomially simulate any $M$, these fixed-machine formulations are still complete respectively.
        \end{itemize}
        \item Observe that $\Omega_U^t$ is literally a weighted aggregate of bounded-halting answers. Fix a specific prefix-free universal machine $U$ and a step model where in $t$ steps it can inspect at most $t$ input bits. Then:
        \begin{itemize}
            \item $\Omega_U^t = \textnormal{Pr}[U \textnormal{ halts within } t] = \sum\limits_{p:\,U(p)\downarrow \le t} 2^{-|p|}$
            \item Due to the ``can’t read more than $t$ bits in $t$ steps" fact, any program that halts within $t$ must satisfy $|p| \le t$. So define the finite truth table slice:\\
            \[H_t(p) := 1[U(p) \textnormal{ halts within } t], p \in \{0, 1\}^{\le t}\]
            Then,
            \[\Omega_U^t = \sum\limits_{p \in \{0, 1\}^{\le t}} 2^{-|p|}H_t(p)\]
            So in the strongest sense, the truth table of a bounded-halting problem contains $\Omega_U^t$ as a simple function of that truth table slice, suggesting $\Omega_U$'s entropy could \emph{trickle down} as $\Omega(f(n))$ circuit lower bounds for its time-bounded forms because $\Omega_U^t$ should look sufficiently random to particular $t$-bounded truth compressors with $U$ acting like a seed to the pseudo-randomness.
        \end{itemize}
    \end{itemize}
    \item Is it possible that, assuming non-uniform or uniform $poly(2^n)$ circuits for $MCSP_{SEARCH}$ exist, we can extract sub-circuits to create small programs that output incompressible bits?
    \item Can efficient Circuit Frege proofs for ``incompressible" $MCSP$ tautologies get \emph{unrolled} to create small programs that output incompressible bits?
\end{itemize}

\printbibliography

@article{Shannon1949,
  author  = {Claude E. Shannon},
  title   = {The Synthesis of Two-Terminal Switching Circuits},
  journal = {Bell System Technical Journal},
  volume  = {28},
  number  = {1},
  pages   = {59--98},
  year    = {1949}
}

@article{Shannon1948,
  author  = {Claude E. Shannon},
  title   = {A Mathematical Theory of Communication},
  journal = {Bell System Technical Journal},
  volume  = {27},
  pages   = {379--423, 623--656},
  year    = {1948}
}

@incollection{Lupanov1958,
  author    = {O. B. Lupanov},
  title     = {A Method of Circuit Synthesis},
  booktitle = {Problemy Kibernetiki (Problems of Cybernetics)},
  volume    = {3},
  year      = {1958},
  publisher = {Fizmatgiz},
  address   = {Moscow}
}

@article{Kolmogorov1965,
  author  = {A. N. Kolmogorov},
  title   = {Three approaches to the quantitative definition of information},
  journal = {Problems of Information Transmission},
  volume  = {1},
  number  = {1},
  pages   = {1--7},
  year    = {1965},
  note    = {Originally published in Russian in \emph{Problemy Peredachi Informatsii}}
}

@article{Chaitin1975,
  author    = {Gregory J. Chaitin},
  title     = {A Theory of Program Size Formally Identical to Information Theory},
  journal   = {Journal of the ACM},
  volume    = {22},
  number    = {3},
  pages     = {329--340},
  year      = {1975},
  month     = jul,
  doi       = {10.1145/321892.321894},
  publisher = {Association for Computing Machinery}
}

@article{RazborovRudich97,
  author  = {A. A. Razborov and S. Rudich.},
  title   = {Natural Proofs},
  journal = {J. Comput. Syst. Sci.},
  volume  = {55},
  number  = {1},
  pages   = {24--35},
  year    = {1997}
}

@incollection{cook2000pvsnp,
  author       = {Cook, Stephen A.},
  title        = {The {P} Versus {NP} Problem},
  booktitle    = {The Millennium Prize Problems},
  url          = {https://www.claymath.org/wp-content/uploads/2022/06/pvsnp.pdf},
  note         = {Clay Mathematics Institute Millennium Prize Problem Description}
}

\end{document}